\documentclass[11pt]{article}

\usepackage{authblk}
\usepackage{amsthm}
\usepackage[letterpaper, left=1in, right=1in, bottom=1in, top=1in]{geometry}

\usepackage{color}
\usepackage{amsmath}
\usepackage{amssymb}
\usepackage{graphicx}
\graphicspath{{img/}}
\usepackage{mathtools}
\DeclarePairedDelimiter{\abs}{\lvert}{\rvert}
\usepackage{lineno,hyperref}
\modulolinenumbers[5]

\usepackage[T1]{fontenc}
\usepackage{diagbox}
\usepackage{url}
\usepackage{hyperref}
\usepackage{cleveref}
\bibliographystyle{plainurl}

\newtheorem{theorem}{Theorem}
\newtheorem{corollary}{Corollary}
\newtheorem{lemma}{Lemma}
\newtheorem{proposition}{Proposition}

\theoremstyle{definition}
\newtheorem{definition}{Definition}

\DeclareMathOperator{\cent}{center}
\newcommand{\LogRmQ}{\textsf{LogRmQ}}
\newcommand\ceil[1]{\lceil#1\rceil}
\newcommand\floor[1]{\lfloor#1\rfloor}
\newcommand{\polylog}{\mathsf{polylog}}
\newcommand{\swpushback}{\mathsf{pushback}}
\newcommand{\swpop}{\mathsf{pop}}
\newcommand{\sups}{\mathsf{sups}}

\newcommand{\SUPS}{\mathsf{SUPS}}
\newcommand{\MUPS}{\mathsf{MUPS}}
\newcommand{\lMUPS}{\mathsf{LMUPS}}
\newcommand{\MLen}{\mathsf{MLen}}
\newcommand{\inc}{\mathsf{inc}}

\begin{document}
\title{Data structures for computing unique palindromes\\ in static and non-static strings}

\author[1]{Takuya~Mieno}
\author[2,3]{Mitsuru~Funakoshi}

\affil[1]{Department of Computer and Network Engineering, University of Electro-Communications, Japan}
\affil[2]{Department of Informatics, Kyushu University, Japan}
\affil[3]{Japan Society for the Promotion of Science}

\date{}
\maketitle

\abstract{
  A palindromic substring $T[i.. j]$ of a string $T$ is said to be a shortest unique palindromic substring~(SUPS) in $T$ for an interval $[p, q]$
  if $T[i.. j]$ is a shortest {palindromic substring}
  such that
  $T[i.. j]$ occurs only once in $T$, and
  $[i, j]$ contains $[p, q]$.
  The SUPS problem is, given a string $T$ of length $n$, to construct a data structure that can compute all the SUPSs for any given query interval.
  It is known that any SUPS query can be answered in $O(\alpha)$ time after $O(n)$-time preprocessing,
  where $\alpha$ is the number of SUPSs to output [Inoue et al., 2018].
In this paper, we first
  show that $\alpha$ is at most $4$, and the upper bound is tight.
We also show that the total sum of lengths of minimal unique 
  {palindromic} 
  substrings of string $T$,
  which is strongly related to SUPSs, is $O(n)$.
  Then, we present the first $O(n)$-bits data structures that can answer any SUPS query in constant time.
  Also, we present an algorithm to solve the SUPS problem for a sliding window
  that can answer any query in $O(\log\log W)$ time and update data structures in amortized 
  {$O(\log\sigma + \log\log W)$ time,}
  where $W$ is the size of the window, and $\sigma$ is the alphabet size.
  Furthermore, we consider the SUPS problem in the after-edit model and present an efficient algorithm.
  Namely, we present an algorithm that uses $O(n)$ time for preprocessing
  and answers any $k$ SUPS queries in $O(\log n\log\log n + k\log\log n)$ time
  after single character substitution.
  Finally, as a by-product,
  we propose a fully-dynamic data structure for range minimum queries (RmQs) with a constraint where
  the width of each query range is limited to poly-logarithmic.
  The constrained RmQ data structure can answer such a query in constant time and support a single-element edit operation in amortized constant time.
}


\section{Introduction}

A substring $T[i.. j]$ of a string $T$ is said to be
a \emph{shortest unique palindromic substring} (in short, \emph{SUPS}) for an interval $[p, q]$
if
$T[i.. j]$ is the shortest substring
such that
$T[i.. j]$ is a palindrome,
$T[i.. j]$ occurs only once in $T$, and
the occurrence contains $[p, q]$, i.e., $[p, q] \subseteq [i, j]$.
The notion of SUPS was introduced by Inoue et al.~\cite{inoue2018algorithms} in 2018, motivated by bioinformatics:
for example, in DNA/RNA sequences, the presence of unique palindromic sequences can affect the immunostimulatory activities of oligonucleotides~\cite{Kuramoto1992oligonucleotide,yamamoto1992unique}.
Given a string $T$ of length $n$, the SUPS problem is to construct a data structure that can compute all SUPSs for any given query interval.
We call this general problem the interval SUPS problem because queries are intervals.
When a query interval is restricted to a single position~(i.e., $p = q$),
the SUPS problem is called the point SUPS problem.
The (interval) SUPS problem was formalized by Inoue et al.~\cite{inoue2018algorithms},
and they showed that all SUPSs for a query interval can be enumerated in $O(\alpha)$ time after $O(n)$-time preprocessing,
where $\alpha$ is the number of SUPSs to output.
Watanabe et al.~\cite{WatanabeNIBT20} considered the SUPS problem on run-length encoded strings to reduce the space usage.
They proposed an $O(r)$-space data structure that can enumerate all SUPSs for a query interval in $O(\sqrt{\log r / \log \log r}+\alpha)$ time
where $r$ is the size of the run-length encoded string, which satisfies $r \le n$.

Both of the above results are for a static string.
It is a natural question whether we can compute SUPSs efficiently in a \emph{dynamic} string.
  In fact, since DNA sequences contain errors and change dynamically, it is worthwhile to consider them in a dynamic string setting.
However, there is no research for solving the SUPS problem on a dynamic string to the best of our knowledge.
Thus, in this paper, as a first step to designing dynamic algorithms, we consider the problem on two \emph{semi-dynamic} models:
the \emph{sliding-window} model and the \emph{after-edit} model.
The sliding-window model aims to compute some objects (e.g., data structure, compressed string, statistics, and so on)
w.r.t. the window sliding over the input string left to right.
The after-edit model aims to compute some objects w.r.t. the string after applying an edit operation to the input string.
Edit operations are given as queries, and they are discarded 
{after processing the query.}
As related work,
the set of \emph{minimal unique palindromic substrings} (\emph{MUPSs}) can be maintained efficiently in the sliding-window model~\cite{mieno2021eertree}.
Also, the set of MUPSs can be updated efficiently in the after-edit model~\cite{funakoshiMUPSafteredit}.
Since MUPSs are strongly related to SUPSs, we utilize the above known results for MUPSs as black boxes.

Contributions of this paper are summarized as follows:
\begin{description}
  \item[{Section~\ref{sec:comb}:}] {\bf Combinatorial properties on SUPSs and MUPSs.}
    \begin{itemize}
      \item We show that the number $\alpha$ of SUPSs for any single interval is at most four,
        and the upper bound is tight even for binary strings,
      \item We show that the sum of lengths of MUPSs of a string of length $n$ is $O(n)$.
    \end{itemize}
  \item[{Section~\ref{sec:compact}}:] {\bf Compact SUPS data structures for static strings.}
    \begin{itemize}
      \item We propose a compact data structure of size $3n+2m+o(n)$ bits that can answer any \emph{interval SUPS} query in constant time
        where $n$ is the length of the input string and $m$ is the number of MUPSs of the input string.
      \item We propose a compact data structure of size $3n+m+o(n)$ bits that can answer any \emph{point SUPS} query in constant time.
    \end{itemize}
  \item[{Section~\ref{sec:dynamic_algo}}:] {\bf Algorithms for SUPS problem for semi-dynamic strings.}
    \begin{itemize}
      \item We propose a data structure of size $O(W)$ for the sliding-window SUPS problem
        that supports SUPS query in $O(\log\log W)$ time and each window-shift in amortized
        $O(\log\sigma + \log\log W)$ time
        where $W$ is the size of the window and $\sigma$ is the alphabet size.
      \item We propose a data structure of size $O(n)$ for the 
{after-substitution SUPS problem}
        that can answer any {after-substitution-SUPS} query in amortized $O(\log n\log\log n)$ time
        after some character in the input string is substituted with another character.
    \end{itemize}
\end{description}
Furthermore, as a by-product, we propose a fully-dynamic data structure
for the range minimum query (RmQ) in which the width of each query range is in $O(\polylog(n))$.
The data structure can answer such a query in constant time and update in (amortized) constant time for any single-element edit operation.
Note that, for the original RmQ without any additional constraint,
it is known that we need $\Omega(\log n/ \log\log n)$ time for answering a query when $O(\polylog(n))$ updating time is allowed~\cite{alstrup1998}.
\paragraph{\bf Related Work.}
A typical application to the sliding-window model is string compression such as LZ77~\cite{LZ77} and PPM~\cite{PPM}.
The sliding-window LZ77 compression is based on the sliding-window suffix tree~\cite{FialaGreene1989,larsson1996extended,ukkonen1995line,senft2005suffix}.
Also, the sliding-window suffix tree can be applied to compute minimal absent words~\cite{crochemore2020absent} and minimal unique substrings~\cite{mieno2021computing},
which are significant concepts for bioinformatics,
in the sliding-window model.
Recently, the sliding-window palindromic tree was proposed, and it can be applied to compute MUPSs in the sliding-window model~\cite{mieno2021eertree}.

The after-edit model was formalized by Amir et al.~\cite{Amiretal17} in 2017.
They tackled the problem of computing
{the longest common substring}
for two strings in the after-edit model,
and proposed an algorithm running in poly-logarithmic time.
Afterward, Abedin et al.~\cite{Abedinetal22} improved the complexities.
Also, the problems of computing the longest Lyndon substring~\cite{Urabe18}, the longest palindrome~\cite{Funakoshi21}, and the set of MUPSs~\cite{funakoshiMUPSafteredit} were considered in the after-edit model.

As for more general settings,
Amir et al.~\cite{Amiretal20} proposed a fully-dynamic algorithm for computing 
{the longest common substrings} 
for two dynamic strings.
They also developed a general (probabilistic) scheme for dynamic problems on strings
and applied it to the computation of the longest Lyndon substring and the longest palindrome in a dynamic string.
Besides that, there are several studies for dynamic settings (e.g.,~\cite{GawrychowskiKKL18,AmirBCK19,Charalampopoulos20}).
In particular, a fully-dynamic and deterministic algorithm for computing the longest palindrome was shown in~\cite{Amir_Boneh_19}.

\paragraph{\bf Paper Organization.}
The rest of this paper is organized as follows:
In Section~\ref{sec:pre}, we give basic notations and algorithmic tools.
In Section~\ref{sec:static_algo}, we review a known static SUPS data structure proposed by Inoue at al.~\cite{inoue2018algorithms}, which is the basis of most of our methods.
In Section~\ref{sec:comb}, 
{we investigate combinatorial properties on SUPSs and MUPSs.}
We show the tight bounds on the maximum number of SUPSs for an interval and an upper bound of the total sum of the lengths of MUPSs.
Further, we propose a simple algorithm for point SUPS queries based on the combinatorial results.
In Section~\ref{sec:compact},
we propose the first compact data structures that can answer any SUPS query in output-sensitive time.
In Section~\ref{sec:dynamic_algo}, we consider how to update SUPS data structures in semi-dynamic settings and propose efficient algorithms.
Finally, in Section~\ref{sec:conclusions}, we conclude our paper and discuss future work.
 \section{Preliminaries}\label{sec:pre}
\subsection{Strings}
Let $\Sigma$ be an alphabet.
An element of $\Sigma$ is called a character.
An element of $\Sigma^\ast$ is called a string.
The length of a string $T$ is denoted by $\abs{T}$.
The empty string $\varepsilon$ is the string of length $0$.
For each $i$ with $1\le i \le \abs{T}$, we denote by $T[i]$ the $i$-th character of $T$.
If $T = xyz$, then $x$, $y$, and $z$ are called a prefix, substring, and suffix of $T$, respectively.
For each $i, j$ with $1\le i \le j \le \abs{T}$, we denote by $T[i..j]$ the substring of $T$
starting at position $i$ and ending at position $j$.
For convenience, let $T[i'.. j'] = \varepsilon$ for any $i', j'$ with $i' > j'$.
We say that string $w$ is unique in $T$ if $w$ occurs only once in $T$.
For convenience, we define that the empty string $\varepsilon$ is not unique in any string.
For a {non-empty} string $T$ and a positive integer $p$ with $p \le \abs{T}$,
the integer $p$ is a period of $T$ if $T[i] = T[i+p]$ holds for every $i$ with $1\le i \le \abs{T}-p$.
We also say that $T$ has a period $p$ if $p$ is a period of $T$.
{For convenience, the empty string $\varepsilon$ is defined to have period $0$.}

Let $T^R$ denote the reversal of a string $T$, i.e., $T[i] = T^R[n-i+1]$ for every $i$ with $1\le i \le n$.
A string $P$ is called a palindrome if $P = P^R$ holds.
A palindrome $P$ is called an even-palindrome (resp.,~odd-palindrome) if $\abs{P}$ is even (resp.,~odd).
The length-$\ceil{\abs{P}/2}$
prefix (resp.,~suffix) of a palindrome $P$ is called the left arm (resp.,~right arm) of $P$.
Let $w = T[i.. j]$ be a palindromic substring of $T$.
The center of $w$ is $(i+j)/2$ and is denoted by $\cent(w)$.
For a non-negative integer $\ell$, $x = T[i-\ell.. j+\ell]$ is said to be an expansion of $w$ if
$1 \le i-\ell \le j+\ell \le n$ and $x$ is a palindrome.
Also, $T[i+\ell.. j-\ell]$ is said to be a contraction of $w$.
Further, if $i = 1$, $j = n$, or $T[i-1] \ne T[j+1]$, then $w$ is said to be a maximal palindrome.

A palindromic substring $u = T[i.. j]$ of a string $T$ is said to be a minimal unique palindromic substring~(MUPS) in $T$
if $u$ is unique in $T$ and $T[i+1.. j-1]$ is not unique in $T$.
A palindromic substring $v = T[i.. j]$ of a string $T$ is said to be a shortest unique palindromic substring~(SUPS) for an interval $[p, q]$ in $T$
if $v$ is unique in $T$, the occurrence contains interval $[p, q]$, and any shorter palindromic substring of $T$ that contains $[p, q]$ is not unique in $T$.
We denote by $\SUPS_T([p, q])$ the set of SUPSs for $[p, q]$.
Note that all palindromes in $\SUPS_T([p, q])$ have equal lengths.
See also Fig.~\ref{fig:MUPS_SUPS} for examples.
\begin{figure}[t]
  \centerline{
    \includegraphics[width=0.8\linewidth]{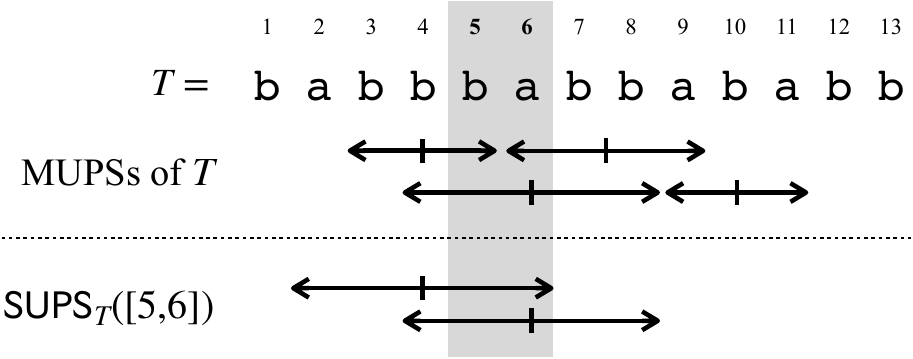}
  }
  \caption{
    MUPSs of string $T = \mathtt{babbbabbababb}$ are
    $\mathtt{bbb}$,
    $\mathtt{bbabb}$,
    $\mathtt{abba}$, and
    $\mathtt{aba}$.
    SUPSs for interval $[5, 6]$ in $T$ are
    $T[2.. 6] = \mathtt{abbba}$ and $T[4.. 8] = \mathtt{bbabb}$.
    The first SUPS $T[2.. 6]$ is an expansion of MUPS $T[3.. 5] = \mathtt{bbb}$,
    and the second SUPS $T[4.. 8]$ itself is a MUPS.
  }
  \label{fig:MUPS_SUPS}
\end{figure}
The interval SUPS problem is,
given a string $T$ for preprocessing and an interval $[p, q]$ as a query,
to compute $\SUPS_T([p, q])$.
We sometimes simply refer to the interval SUPS problem as the SUPS problem.
When every query interval is restricted to a single position~(i.e., $p = q$), the SUPS problem is called the point SUPS problem.

In what follows, we fix a string $T$ of arbitrary length $n > 0$ over an integer alphabet of size $\sigma = O(\mathsf{poly}(n))$.
Also, our computational model is a standard word RAM model of word size $\Omega(\log n)$.

\subsection{Periodicity of Palindromic Suffixes}\label{subsec:period}
In this subsection, 
{we recall some properties regarding the periodicity} 
of palindromic suffixes of $T[1..i]$ that we use.
Let $\mathbf{S}_i = \{ w_1, \ldots, w_{g}\}$ be the set of lengths of palindromic suffixes of $T[1..i]$,
where $g$ is the number of palindromic suffixes of $T[1..i]$
and 
{$w_{k-1} < w_k$} 
for $2 \leq k \leq g$.
Let $d_k$ be the progression difference for $w_k$,
i.e., $d_k = w_{k} - w_{k-1}$ for $2 \leq k \leq g$.
For convenience, 
{let $d_1 = 0$.}

Then, the following results are known:

\begin{lemma}[\cite{ApostolicoBG95,GasieniecSWAT96,matsubara_tcs2009}]
  \label{lem:palindromic_suffixes}
  \hfill
  \begin{enumerate}
    \item[(A)] For any $1 \leq k < g$, $d_{k+1} \geq d_{k}$.
    \item[(B)] For any $1 < k < g$, if $d_{k+1} \neq d_{k}$, then $d_{k+1} \geq d_{k} + d_{k-1}$.
    \item[(C)] $\mathbf{S}_i$ can be represented by $O(\log i)$ arithmetic progressions, where each arithmetic progression is a tuple $\langle s, d, f \rangle$ representing the sequence $s, s+d, \ldots, s + (f-1)d$ of lengths of $f$ palindromic suffixes with common difference $d$.
    \item[(D)] The common difference $d$ is 
    {the smallest period of all} 
    palindromic suffixes of $T[1..i]$ whose length belongs to the arithmetic progression $\langle s, d, f \rangle$.
  \end{enumerate}
\end{lemma}

\begin{lemma}[\cite{matsubara_tcs2009}]\label{lem:batched_extension}
{For any $\langle s, d, f \rangle$ from the representation of palindromic suffixes of $T[1..i]$,}
  there exist palindromes $u, v$ and a non-negative integer $q$,
  such that $(uv)^{f+q-1} u$ (resp. $(uv)^q u$) is
  the longest (resp. shortest) palindromic 
  {suffix} 
  represented by $\langle s, d, f \rangle$
  with $\abs{uv} = d$.
\end{lemma}

For a position $i$, divide the set of palindromic suffixes of $T[1..i]$ into groups $G_1, G_2, \ldots, G_{\pi}$ 
{w.r.t.} 
their smallest periods in increasing order.
For each $G_r = \langle s_r, d_r, f_r \rangle$ with $1 \leq r \leq \pi$,
let $u_r$ and $v_r$ be the corresponding variables
used in Lemma~\ref{lem:batched_extension}.

Let $\mathit{lcp}(x,y)$ for strings $x$ and $y$ denote the length of the longest common prefix of $x$ and $y$.
Also, let $\alpha_r = \mathit{lcp}((T[1..i-s_r])^R, T[i+1..n])$
and $\beta_r = \mathit{lcp}((T[1..i-s_r-(f_r-1)d_r])^R, T[i+1..n])$ {if $f_r \ge 2$}.
Namely, $s_r + 2\alpha_r$ (resp. $s_r+(f_r-1)d_r + 2 \beta_r$) is the length of the maximal expansion of the shortest (resp. longest) palindrome of $\langle s_r, d_r, f_r \rangle$.
See also Figure~\ref{fig:group} for a concrete example of $G_r$.
\begin{figure}[t]
  \centerline{
    \includegraphics[width=0.6\linewidth]{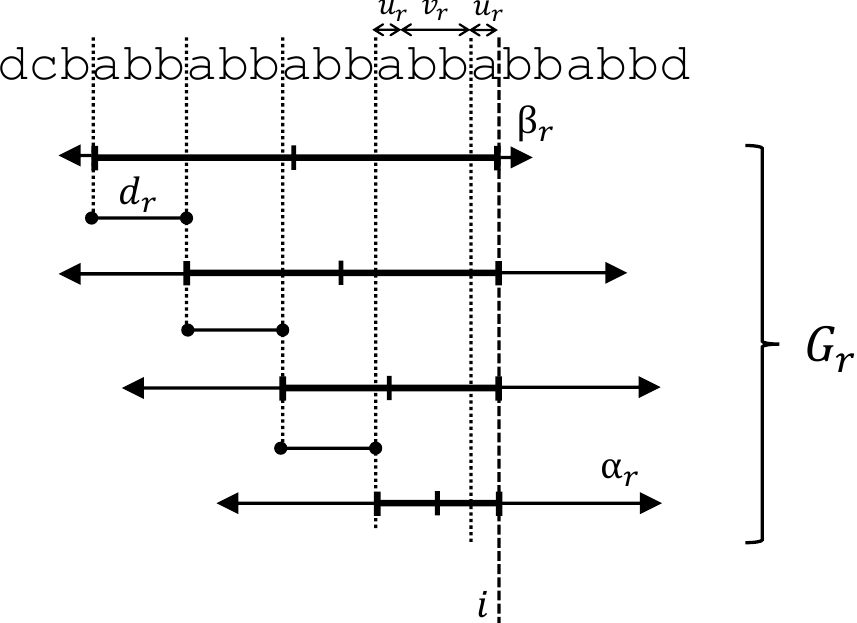}
  }
  \caption{
    Example for a group $G_r$, with string $\mathtt{dcbabbabbabbabbabbabbd}$.
    Here palindromic suffixes $(\mathtt{abb})^j \mathtt{a}$ with $1\leq j \leq 4$ ending at $i$ belong to $G_r = \langle 4, 3, 4 \rangle$.
    Also, $u_r = \mathtt{a}$, $v_r = \mathtt{bb}$, $\alpha_r = 5$, and $\beta_r = 1$ hold.
    Note that this $G_r$ is of {type 2}.
  }
  \label{fig:group}
\end{figure}

{If $f_r = 1$, i.e., if $G_r$ is a singleton,} let $\beta_r$ be the length of the maximal expansion of the palindrome.
{In addition, if $r \ge 2$, let $\alpha_{r}=\beta_{r-1}$.}
{Note that $G_1 = \{\varepsilon\}$ is a singleton, i.e., $f_1 = 1$. For convenience, let $\alpha_1 = 0$.}
Each group $G_r$ is said to be of 
{\emph{type 1} (resp. {\emph{type 2}})}
if $\alpha_r < d_r$ (resp. $\alpha_r \geq d_r$).
Let $k$ be the largest index of groups such that $G_k$ is of type 2.
{Since $\alpha_1 = d_1 = 0$ always holds, $G_1$ is of type 2 and $k$ is well-defined.}
In the proof of Claim (1) and (2) of~\cite{Funakoshi21}, the following statements are also proven:

\begin{corollary}\label{cor:claims}
  The following two statements hold:
  \begin{enumerate}
    \item[(A)] Any expansion of a palindrome $Q$ in group $G_r$ for every $r$ with $1 \le r \le k-1$ except for $u_kv_ku_k$ and $u_k$ cannot be longer than $\abs{Q}+2d_k$.
    \item[(B)]  For every $r$ with $k+1 \le r \le \pi-2$ and every palindrome $P$ in group $G_r$, the length of any expansion of $P$ is at most $\abs{P}+2\beta_r < \abs{P} + 2 d_{r+1}$.
  \end{enumerate}
\end{corollary}

From Corollary~\ref{cor:claims}, the following lemmas can be obtained.
\begin{lemma}\label{lem:type2_expansion}
  Any expansion of a palindrome $Q$ in group $G_r$ for every $r$ with $1 \le r \le k-1$ except for $u_kv_ku_k$ and $u_k$ cannot be unique in $T$.
\end{lemma}
\begin{proof}
  Since $G_k$ is of {type 2}, the length of the maximal expansion of $s_k$ is $\abs{s_k} + 2 \alpha_k \geq \abs{s_k} + 2 d_k$.
  From statement (A) of Corollary~\ref{cor:claims},
  {
    any expansion of a palindrome $Q$, whose center is differ from the center of $s_k$, is contained by the maximal expansion of $s_k$.
    Namely, any expansion of $Q$ occurs at least twice in $s_k$.
  }
\end{proof}
  
\begin{lemma}\label{lem:type1_expansion}
  For every $r$ with $k+1 \le r \le \pi-2$ and every palindrome $P$ in group $G_r$,
  any expansion of $P$ is not longer than 
  $s_{r+2}${, that is,} 
  the length of the shortest palindrome in $G_{r+2}$.
\end{lemma}
\begin{proof}
  From the formula of statement (B) of Corollary~\ref{cor:claims} and the definition of the progression differences $d_{r+1}$ and $d_{r+2}$, $\abs{P}+2\beta_r < \abs{P} + 2 d_{r+1} < \abs{P} + d_{r+1} + d_{r+2} \leq s_{r+1} + d_{r+2} \leq s_{r+2}$ holds.
\end{proof}

\subsection{Tools}
\paragraph*{\bf Longest Common Extension.}
A longest common extension (in short, LCE) query on string $T$ is,
given two integers $i, j$ with
{$1 \le i, j \le n$,}
to compute the length of the longest common prefix (LCP) of two suffixes $T[i.. n]$ and $T[j.. n]$.
It is known~(e.g.,~\cite{Gusfield1997})
that any LCE query can be answered in constant time
using the suffix tree of $T\$$ enhanced with a lowest common ancestor data structure,
where $\$$ is a special character that is not in $\Sigma$.
Once we build an LCE data structure on string $T\#T^R\$$,
we can answer any LCE query in any direction on $T$ in constant time,
where $\# \not\in\Sigma$ is another special character.
Namely, we can compute in constant time the length of
(1) the LCP length of any two suffixes of $T$,
(2) the LCP length of the reverses of any two prefixes of $T$, and
(3) the LCP length of any suffix of $T$ and the reverse of any prefix of $T$.
We call such a data structure a bidirectional LCE data structure.

\paragraph*{\bf RmQ, Predecessor and Successor.}

A range minimum query (RmQ) on integer array $A$ is,
given two indices $i, j$ on $A$ with $i \le j$,
to compute 
{an arbitrary index $k$ such that $A[k]$ is the minimum value from $A[i.. j]$.}

A predecessor (resp., successor) query on non-decreasing integer array $B$ is,
given an integer $x$,
to compute the maximum (resp., minimum) value that is smaller (resp., greater) than $x$.
We use the famous van Emde Boas tree data structure~\cite{vEBtree1977} to answer predecessor/successor queries.
Namely, we can answer a query and update the data structure in $O(\log\log U)$ time on a dynamic array,
where $U$ is the universe size. Also, the space complexity is $O(U)$.
Throughout this paper, 
{we will only apply this result to the case of $U = n$.}

\subsection{Our Problems}
This paper handles SUPS problems under two variants of semi-dynamic models:
the sliding-window model and the after-edit model.
The sliding-window SUPS problem is to support any sequence of queries that consists of the following:
\begin{itemize}
  \item $\swpushback(c)$: append a character $c$ to the right end of the string.
  \item $\swpop()$: remove the first character from the string.
  \item $\sups([p, q])$: output all SUPSs of the string for an interval $[p, q]$.
\end{itemize}
The {after-substitution SUPS} problem on a string $T$ is, given
a substitution operation and a sequence of intervals,
to compute SUPSs of $T'$ for each interval where $T'$ is the string
after applying the substitution \emph{to the original string $T$}.
Note that each substitution is discarded after the corresponding SUPS queries are answered.

From the point of view of how the string changes, there are differences between the above two problems.
On the one hand, in the sliding-window SUPS problem,
the string can be changed dynamically under the constraints of positions to be edited.
On the other hand, in the {after-substitution SUPS problem}, any position of the string can be changed,
however, the string returns to the original one after the SUPS queries are answered.
 
\section{{Inoue et al.'s} Static SUPS Data Structure} \label{sec:static_algo}

The SUPS data structure proposed in \cite{inoue2018algorithms} consists of the following:

\begin{itemize}
  \item the set of MUPSs of $T$,
  \item the set of maximal palindromes of $T$\\
    (or a bidirectional LCE data structure on $T$, instead),
  \item a successor data structure on the starting positions of MUPSs,
  \item a predecessor data structure on the ending positions of MUPSs, and
\item an RmQ data structure on
    {the array $\mathsf{MUPSlen}$ of lengths of MUPSs sorted by their starting positions\footnote{Since MUPSs cannot be nested~\cite{inoue2018algorithms}, they are also sorted by their ending positions.}.}
\end{itemize}

Given a query interval $[p, q]$, we can compute all SUPSs for $[p, q]$ as follows:
First, we determine whether the interval $[p, q]$ covers some MUPS or not
by querying the predecessor of $q$ on the ending positions of MUPSs
and the successor of $p$ on the starting positions of MUPSs.
If $[p, q]$ covers only one MUPS,
the shortest expansion of the MUPS that covers $[p, q]$
is the only SUPS for $[p, q]$ if such a palindrome exists,
and there are no SUPSs for $[p, q]$ otherwise.
If $[p, q]$ covers more than one MUPS,
then there are no SUPSs for $[p, q]$ since any SUPS covers exactly one MUPS~\cite{inoue2018algorithms}.
Otherwise, i.e., if $[p, q]$ covers no 
{MUPSs}, all SUPSs are categorized into following three types (see also Fig.~\ref{fig:sups_cand}):

\begin{figure}[t]
  \centerline{
    \includegraphics[width=0.7\linewidth]{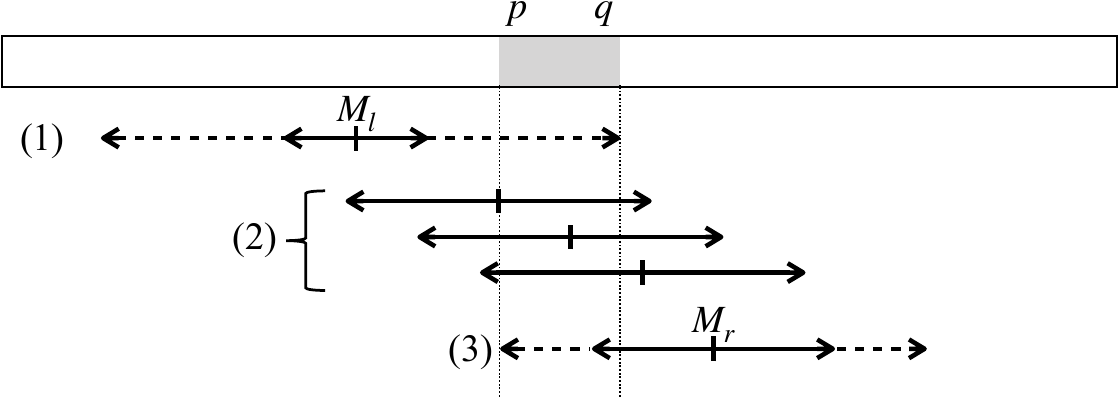}
  }
  \caption{
    Illustration for candidates for SUPSs for interval $[p, q]$.
    Solid arrows represent MUPSs, and dashed arrows represent expansions of MUPSs.
    Note that dashed arrows may not be palindromes in general.
  }
  \label{fig:sups_cand}
\end{figure}

\begin{enumerate}
  \item[(1)] an expansion of the rightmost MUPS $M_l$ which ends before $q$,
\item[(2)] {a MUPS which covers $[p, q]$}, or
  \item[(3)] an expansion of the leftmost MUPS $M_r$ which begins after $p$.
\end{enumerate}

We call $M_l$ and $M_r$ the left-neighbor MUPS and the right-neighbor MUPS of the interval $[p, q]$, respectively.
We can find $M_l$ by querying the predecessor of $q$.
Also, we can determine whether there is an expansion of $M_l$, which covers $[p, q]$
by looking at the maximal palindrome centered at $c_l = \cent(M_l)$.
Precisely, if the maximal palindrome centered at $c_l$ covers $[p, q]$,
its shortest contraction covering $[p, q]$ is the only candidate of type (1).
Otherwise, there is no SUPS of type (1).
We emphasize that we can also compute the maximal palindrome centered at $c_l$
by querying bidirectional LCE once, without the precomputed maximal palindromes.
The candidate of type (3) can be treated similarly.
Finally, all SUPSs of type (2) can be computed by querying RmQ recursively on {the array $\mathsf{MUPSlen}$}.
Let $[b_i, e_i], \ldots, [b_j, e_j] \in \MUPS(T)$ be all MUPSs covering $[p, q]$.
Recall that such range $[i, j]$ of MUPSs can be detected by querying predecessor and successor (see above).
We query the RmQ on $\mathsf{MUPSlen}$ for the range $[i, j]$, and obtain the index $k$ that is the answer of the RmQ.
Namely, $[b_k, e_k]$ is a shortest one within $[b_i, e_i], \ldots, [b_j, e_j]$.
Then, we further query the RmQ on $\mathsf{MUPSlen}$ for the ranges $[i, k-1]$ and $[k+1, j]$, and repeat it recursively
while obtained MUPS is a SUPS for $[p, q]$.
The above operations can be done in time linear in the number of SUPSs to output
by using linear size data structures of predecessor, successor, and RmQ.
 \section{Combinatorial Properties on SUPSs} \label{sec:comb}
\subsection{Tight Bounds on Maximum Number of SUPSs for Single Query} \label{subsec:num}

In this subsection, we prove the following theorem:

\begin{theorem}\label{thm:num_of_SUPS}
  For any interval $[p, q]$ over $T$, the inequality 
  {$\abs{\SUPS_T([p, q])}\le 4$} 
  holds.
  Also, this upper bound is tight even for binary strings.
\end{theorem}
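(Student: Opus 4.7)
The plan is to partition $\SUPS_T([p,q])$ according to how each SUPS sits on the boundary of $[p,q]$ and to bound each part separately. Every $S=T[i..j]\in\SUPS_T([p,q])$ satisfies $i\le p$ and $j\ge q$, and all members share a common length $\ell$. I distinguish three classes: \emph{left-pinned} ($i=p$), \emph{right-pinned} ($j=q$), and \emph{free} ($i<p$ and $j>q$). Any left-pinned SUPS must coincide with the single substring $T[p..p+\ell-1]$, so at most one left-pinned SUPS exists, and symmetrically at most one right-pinned SUPS exists. For a free SUPS $T[i..j]$, the one-step contraction $T[i+1..j-1]$ is a strictly shorter palindrome still containing $[p,q]$, so by the minimality clause in the SUPS definition it cannot be unique; this is exactly the defining property of a MUPS. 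Therefore every free SUPS is a MUPS of length $\ell$, and the theorem reduces to showing that at most two MUPSs of length $\ell$ contain $[p,q]$.

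For this core inequality I would argue via periodicity. Suppose for contradiction there are three such MUPSs $M_1,M_2,M_3$ with centers $c_1<c_2<c_3$. Composing the palindromic reflections at $c_k$ and $c_{k+1}$ forces the union $T[i_k..j_{k+1}]$ to have period $2(c_{k+1}-c_k)$, so the middle window $M_2=T[i_2..j_2]$ of length $\ell$ inherits both periods $2(c_2-c_1)$ and $2(c_3-c_2)$. Because each center lies in a range of width $\ell-(q-p)$, one can split on the query width: when $q-p+1$ is not too small relative to $\ell$, Fine and Wilf's theorem applies directly and yields a short common period $g=\gcd(2(c_2-c_1),2(c_3-c_2))$ on the window containing $M_2$; the remaining small-interval regime is handled by a direct structural argument using the fact that close centers force highly restricted local structure. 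Using the MUPS property of $M_2$, namely that its contraction $T[i_2+1..j_2-1]$ occurs elsewhere in $T$, the short periodic structure around $M_2$ then extends that other occurrence of the contraction to a second occurrence of $M_2$ itself, contradicting the uniqueness of the MUPS $M_2$.

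The main obstacle I anticipate is precisely this small-query regime, where Fine and Wilf's hypothesis is delicate and one must lean on the MUPS condition (not just uniqueness of palindromes) to force the duplicate occurrence; some care with the parity of $\ell$ and with whether $c_2-c_1$ and $c_3-c_2$ are equal or not will be required. Combining the three class bounds yields $|\SUPS_T([p,q])|\le 1+1+2=4$. For tightness, I would exhibit an explicit short word $T$ over a small alphabet together with a single-position query $[p,p]$ for which four length-$\ell$ unique palindromes centered near $p$ realize one left-pinned SUPS, one right-pinned SUPS, and two free SUPSs coming from two MUPSs symmetric about $p$, certifying that the bound $4$ is attained.
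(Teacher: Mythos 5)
Your reduction to the core inequality is where the argument breaks: the claim that at most two \emph{free} SUPSs (equivalently, at most two MUPSs of length $\ell$ strictly containing $[p,q]$) exist is false. The paper's own tightness example refutes it: for the string $S$ of length $90$ and the query $[18,18]$, the four SUPSs are $S[1..19]$, $S[4..22]$, $S[16..34]$, $S[18..36]$; the first three all satisfy $i<18$ and $j>18$, so your ``free'' class already contains three members (each of which is indeed a MUPS of length $19$, exactly as your contraction argument predicts). Hence no ``direct structural argument'' can close the small-query regime you flag as the main obstacle --- the statement you are trying to prove there is simply not true. Concretely, the periodicity step fails because for three length-$\ell$ palindromes covering $[p,q]$ with no further constraint, the centers can be spread over a range of width up to $\ell-1-(q-p)$, so $2d_1+2d_2$ can be close to $2\ell$ and Fine--Wilf never kicks in; in the example above $2d_1+2d_2=30>19=\ell$.

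The paper's proof uses a different decomposition that is exactly what makes Fine--Wilf applicable: it classifies SUPSs by whether their \emph{center} is at most $p$ or at least $p$, rather than by whether an endpoint is pinned to the query. If three SUPSs of length $\ell$ all cover position $p$ and all have centers $\le p$, then every center lies in an interval of width about $\ell/2$, so $d_1+d_2\le\ell/2$, the two periods $2d_1$ and $2d_2$ of the middle palindrome sum to at most $\ell$, Fine--Wilf yields a common period $g=\gcd(2d_1,2d_2)$ on the whole union $W$, and since $g+\ell\le|W|$ the leftmost palindrome $x$ satisfies $x=W[1..\ell]=W[g+1..g+\ell]$, contradicting its uniqueness (no appeal to the MUPS property is needed). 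This gives at most two per side and hence at most four in total. Your observations that all SUPSs share a common length, that at most one SUPS starts at $p$, and that a SUPS with $i<p$ and $j>q$ must be a MUPS are all correct, but they do not lead to the bound; you need to regroup by center position. The tightness example also needs to be an actual explicit string --- constructing one with four unique palindromes of equal length covering a single position while killing all shorter covering palindromes is the delicate part, and the paper does it with a length-$90$ word over four letters built from periodic gadgets.
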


{To begin with,} 
we prove Lemma~\ref{lem:overlap_pal}.
Roughly speaking, Lemma~\ref{lem:overlap_pal} states that
{a periodic structure occurs} 
when two palindromes overlap enough.
Essentially, Lemma~\ref{lem:overlap_pal} has been proven in Lemma 3.3 of~\cite{ApostolicoBG95}.
However, we restate the proposition in a form that is convenient for us and show it for completeness.
\begin{lemma}\label{lem:overlap_pal}
  Let $x = T[i.. i+\ell-1]$ and $y = T[j.. j+\ell-1]$ be palindromic substrings of length $\ell$ of string $T$ with $i < j$.
  If $x$ and $y$ overlap, then $z = T[i.. j+\ell-1]$ has period $2d$ where $d$ is the distance between their center positions.
\end{lemma}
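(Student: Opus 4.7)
The plan is to exploit the reflection interpretation of palindromes: a palindrome imposes on $T$ a reflection symmetry about its center, and the composition of two reflections through distinct fixed points is a translation by twice the distance between them. Applying this with the centers of $x$ and $y$ should immediately yield the period $2d$.

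Concretely, set $c_x = i + (\ell-1)/2$ and $c_y = j + (\ell-1)/2$ (half-integer when $\ell$ is even), so that $c_y - c_x = j - i = d$. The palindrome $x$ gives $T[k] = T[2c_x - k]$ for every $k \in [i, i+\ell-1]$, and $y$ gives $T[k'] = T[2c_y - k']$ for every $k' \in [j, j+\ell-1]$. The first step is then to show that whenever $k \in [i, i+\ell-1]$ and the reflected position $k' := 2c_x - k$ lies in $[j, j+\ell-1]$, chaining the two identities yields
\[
T[k] \;=\; T[2c_x - k] \;=\; T[2c_y - (2c_x - k)] \;=\; T[k + 2d],
\]
so every such $k$ witnesses the claimed period.

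The second step is to translate the condition on $k$ into an explicit interval. Substituting $2c_x = 2i + \ell - 1$, the constraint $2c_x - k \in [j, j+\ell-1]$ rewrites as $k \in [2i-j,\, 2i+\ell-1-j]$; intersecting with $k \ge i$ reduces to $i \le k \le 2i+\ell-1-j$, which is exactly $i \le k$ and $k + 2d \le j+\ell-1$. That is precisely the range of positions that must satisfy $T[k] = T[k+2d]$ in order for $z = T[i.. j+\ell-1]$ to have period $2d$.

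Non-emptiness of this range, together with the sanity check $2d \le |z|$, is where the overlap hypothesis enters: $j \le i+\ell-1$ gives $d = j-i \le \ell-1 < \ell$, so $2i+\ell-1-j \ge i$ (hence $k=i$ is valid) and $2d \le d+\ell = |z|$. The main obstacle I anticipate is purely the index bookkeeping needed to show that the range of valid $k$ coincides with the range required for periodicity; once the reflection identities are written down, everything follows mechanically, and the argument handles even and odd palindromes uniformly because only the integer $d = j-i$ appears in the final equation.
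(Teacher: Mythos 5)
Your proof is correct, and it takes a genuinely different route from the paper's. The paper argues via borders: writing $z = rst$ with $s = T[j..i+\ell-1]$ the overlap, it observes that $s^R$ is simultaneously a prefix of $x$ and a suffix of $y$, hence a border of $z$ of length $|s|$, which immediately gives the period $|z| - |s| = 2(j-i)$ by the standard border--period correspondence. You instead compose the two reflection symmetries directly: the palindromic identities $T[k] = T[2c_x - k]$ and $T[k'] = T[2c_y - k']$ chain to $T[k] = T[k+2d]$, and you verify that the set of $k$ for which the chaining is legal coincides exactly with the set of positions required by the definition of period for $z$ (both ranges being $[\,i,\; 2i+\ell-1-j\,]$), with the overlap hypothesis guaranteeing non-emptiness and $2d \le |z|$. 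Your index computations check out, including the uniform treatment of even and odd $\ell$ via half-integer centers. The trade-off is that the paper's border argument is shorter and delegates the arithmetic to a known lemma, whereas yours is self-contained and makes explicit, position by position, why the period holds on all of $z$ rather than just on the overlap region --- at the cost of the bookkeeping you yourself flagged.
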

\begin{proof}
Firstly, $d = (2j+\ell-1)/2-(2i+\ell-1)/2=j-i$ holds.
Let $z = rst$ where $r = T[i.. j-1]$, $s = T[j.. i+\ell-1]$, and $t = T[i+\ell.. j]$.
  Since $x$ and $y$ are palindromes, $s^R$ is a prefix of $x$ and a suffix of $y$.
  Namely, $s^R$ is both a prefix and a suffix of $z$, and thus, $z$ has period $\abs{z}-\abs{s^R} = (j-i+\ell)-(i-j+\ell) =2(j-i) = 2d$.
\end{proof}

Now we are ready to prove Theorem~\ref{thm:num_of_SUPS}.

\begin{proof}[Proof of Theorem~\ref{thm:num_of_SUPS}]
  {Let us focus on the SUPSs whose center} 
  is at most $p$.
{We assume we have at least three such SUPSs for a single query interval $[p, q]$ and show that this leads to a contradiction.}
  {Let $\ell$ be the length of the SUPSs.}
  Let $x$, $y$, and $z$ be the SUPSs from left to right,
  and let $c_x$, $c_y$, and $c_z$ be their center positions~(see also Fig.~\ref{fig:three_pals}).
\begin{figure}[t]
  \centerline{
    \includegraphics[width=0.6\linewidth]{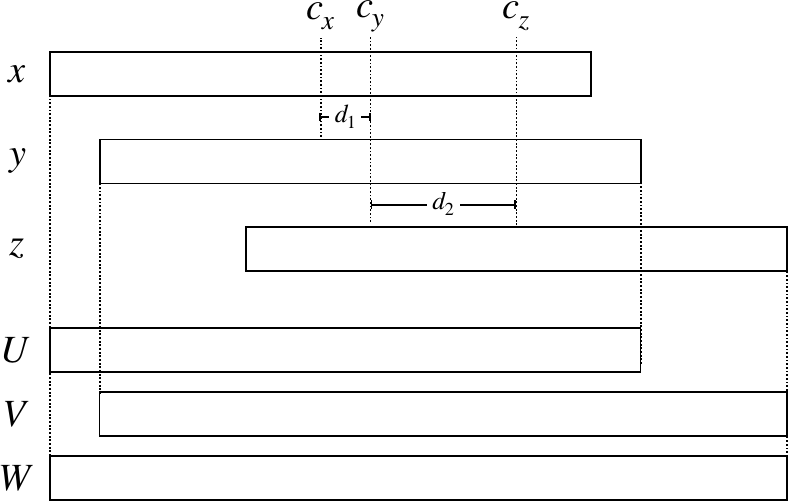}
  }
  \caption{
    Illustration for three overlapped palindromes $x$, $y$, and $z$.
  }
  \label{fig:three_pals}
\end{figure}
  Further let $d_1 = c_y-c_x$ and $d_2 = c_z-c_y$.
Since the center positions of the three SUPSs are at most $p$, and they cover the position $p$,
  they overlap at least $\ell/2$ each other.
  Namely, $d_1 \le \ell/2$, $d_2 \le \ell/2$, and $d_1 + d_2 \le \ell/2$ hold. 
  Next, let
  $U = T[\ceil{c_x-\ell/2}.. \floor{c_y+\ell/2}]$,
  $V = T[\ceil{c_y-\ell/2}.. \floor{c_z+\ell/2}]$, and
  $W = T[\ceil{c_x-\ell/2}.. \floor{c_z+\ell/2}]$.
  By Lemma~\ref{lem:overlap_pal}, $U$ has period $2d_1$ and $V$ has period $2d_2$.
  Thus, $y$ has periods both $2d_1$ and $2d_2$.
  Also, since $2d_1+2d_2 \le \ell$ holds, $y$ has a period $g = \gcd(2d_1, 2d_2)$ by the periodicity lemma~\cite{fine1965uniqueness}
  where $\gcd(a, b)$ denotes the greatest common divisor of $a$ and $b$.
  Then $W$ also has period $g$
  since $g < \abs{y} = \ell$ and $g$ divides both period $2d_1$ of $U$ and period $2d_2$ of $V$.
Furthermore, since $g \leq \min(2d_1, 2d_2) \leq d_1 + d_2$
  and $d_1+d_2+\ell = \abs{W}$, the inequality $g + \ell \leq \abs{W}$ holds,
  and thus, $x = W[1.. \ell] = W[g+1.. g+\ell]$ holds by the periodicity.
  This contradicts the uniqueness of $x$.

  We have shown that the maximum number of 
  {SUPSs whose center} 
  is at most $p$, is two.
  Symmetrically, the maximum number of 
  {SUPSs whose center} 
  is at least $p$ is also two.
  Thus, the maximum number of SUPSs for a single query interval is four.

  Finally, we show that the upper bound is tight.
Let us consider the following string $S \in \{\mathtt{a}, \mathtt{b}\}^\ast$ of length $87$:
  \begin{align*}
    S = &\mathtt{aababaaababaaabab{\color{blue}a}aabaaabaaabaaabaaa}&&\backslash\backslash\text{ length 36}\\
        &+ \mathtt{ababaaababaaababa}&&\backslash\backslash\text{ length 17}\\
        &+ \mathtt{baaababaaababaaab}&&\backslash\backslash\text{ length 17}\\
        &+ \mathtt{baaabaaabaaabaaab}&&\backslash\backslash\text{ length 17}.\\
  \end{align*}
  The operator $+$ denotes the concatenation of strings.
For this string and query interval $[18, 18]$ (highlighted in blue in the figure),
  $\SUPS_S([18, 18]) = \{[1, 19], [4, 22], [16, 34], [18, 36]\}$ holds.
  Note that palindromes $S[2.. 18]$, $S[5.. 21]$, and $S[17.. 33]$, which are shorter than $19$ and cover the interval $[18, 18]$, are not unique
  since each of them has another occurrence in the artificial gadgets concatenated by $+$ operators.
  Also, it can be easily checked that all palindromes of length at most $18$ that cover the interval $[18, 18]$ are not unique.
\end{proof}

The above example having four SUPSs is of length $87$, and the length of each SUPS is $19$.
The smallest period of the former two SUPSs is $6$, and that of the latter two SUPSs is $4$.
We do not know if the example is the shortest one.
 \subsection{Sum of Lengths of All MUPSs}\label{subsec:sum}
In this subsection,
we show an upper bound of the total sum of the 
{lengths} 
of MUPSs in a string $T$.
It was shown that
the number of MUPSs stabbed by a single position is $O(\log \abs{T})$ in~\cite{funakoshiMUPSafteredit}.
This immediately implies that
the sum of the lengths of MUPSs of a string $T$ is $O(\abs{T}\log \abs{T})$. 
Here, we improve the upper bound as follows:
\begin{theorem}\label{thm:sum_of_MUPSlen}
  The total sum of the lengths of MUPSs of a string $T$ is $O(\abs{T})$. 
\end{theorem}

In order to show Theorem~\ref{thm:sum_of_MUPSlen},
we first analyze the sum of the lengths of MUPSs covering a single position.
Let $p$ be an arbitrary position in a string $T$.
Also, let $\lMUPS_T(p)$ be the set of MUPSs that cover $p$ and whose centers are at most $p$.
Namely, $\lMUPS_T(p) = \{[s, t]\in \MUPS(T)\mid s \le p \le t\text{ and }(s+t)/2 \le p\}$.
We show the following lemma:
\begin{lemma}\label{lem:sum_of_leftMUPS}
  For any position $p$ in a string $T$,
  $$\sum_{[s,t]\in\lMUPS_T(p)}(t-s+1) \in O(L_p)$$
  holds where $L_p$ is the maximum length of MUPSs covering position $p$.
\end{lemma}
\begin{proof}
  {Each} MUPS in $\lMUPS_T(p)$ is an expansion of some palindromic suffix of $T[1.. p]$.
  For a set $G$ of palindromic suffixes of $T[1.. p]$ and a MUPS $\mu$ of $T$,
  we say that $\mu$ is \emph{obtained from} $G$ if $\mu$ is an expansion of some palindrome in $G$.
{Here we use some notations used in Section~\ref{subsec:period}.}
  {
    Namely, $G_1, G_2, \ldots, G_{\pi}$ are the groups of palindromic suffixes of $T[1..p]$.
    Also, $k$ is the largest index of groups such that $G_k$ is of type 2.}
  Since no MUPS can be obtained from 
  {$\bigcup_{i\in[1, k-1]} G_i \setminus\{u_k, u_kv_ku_k\}$} 
  by Lemma~\ref{lem:type2_expansion}, we only consider the set $\bigcup_{j\in[k+1, \pi]}G_j \cup G_k \cup \{u_k, u_kv_ku_k\}$.
{If there are no MUPSs obtained from $\bigcup_{j\in[k+1, \pi]}G_j$,
  the number of MUPSs obtained from $\bigcup_{j\in[k+1, \pi]}G_j \cup G_k \cup \{u_k, u_kv_ku_k\}$ is $O(1)$ and the lemma holds.
  Thus, in the following, we consider the case that there exists some MUPSs obtained from $\bigcup_{j\in[k+1, \pi]}G_j$.
}
  
  Now, let $\tilde{\mathcal{G}} = \tilde{G}_1, \ldots, \tilde{G}_{\ell}$
  be the subsequence of the sequence $G_{k+1}, \ldots, G_{\pi}$ such that for each $\tilde{G}_i \in \tilde{\mathcal{G}}$, there exists a MUPS obtained from $\tilde{G}_{i}$.
  Further let $\tilde{s}_{i}$ be the shortest palindrome in $\tilde{G}_{i}$, and let $\tilde{d}_i$ be the smallest period for $\tilde{G}_{i}$.
  Then, by Lemma~\ref{lem:type1_expansion}, for every $i$ with $1 \le i \le \ell-2$,
  any MUPS obtained from group $\tilde{G}_i$ is not longer than $\tilde{s}_{i+2}$.
Since there are at most two MUPSs obtained from a single group~\cite{funakoshiMUPSafteredit},
  the sum of lengths of MUPSs obtained from $\tilde{G}_1\cup\ldots\cup\tilde{G}_{\ell-2}$ is bounded by
  $2\tilde{s}_3+2\tilde{s}_4+\cdots + 2\tilde{s}_\ell$. In general, $\tilde{s}_r < 2\tilde{d}_r$ holds from the periodicity.
  Thus, $2\tilde{s}_3+2\tilde{s}_4+\cdots + 2\tilde{s}_\ell < 4\sum_{r=3}^{\ell} \tilde{d}_r \in O(\tilde{d}_\ell)$
  since $\tilde{d}_{r+2} \geq \tilde{d}_{r+1} + \tilde{d}_{r}$ holds for any $3 \leq r \leq \ell-2$ by Lemma~\ref{lem:palindromic_suffixes}.
  Let $L_p'$ be the maximum length of MUPSs obtained from $H = \{u_k, u_kv_ku_k\}\cup G_k\cup\tilde{G}_{\ell-1}\cup\tilde{G}_{\ell}$.
  Then, the sum of lengths of MUPSs obtained from $H$ is $O(L_p')$.
  Therefore, the total sum of the lengths of MUPSs in $\lMUPS_T(p)$ is in $O(\tilde{d}_\ell + L_p')\subseteq O(L_p)$
  since $\tilde{d}_\ell \le \tilde{s}_\ell \le L_p' \le L_p$.
\end{proof}

By symmetry, Lemma~\ref{lem:sum_of_leftMUPS} immediately leads to the next Corollary~\ref{cor:sum_of_stabbed_MUPSlen}.
\begin{corollary}\label{cor:sum_of_stabbed_MUPSlen}
  For any position $p$ in a string $T$,
  the sum of the lengths of MUPSs covering position $p$ is $O(L_p)$
  where $L_p$ is the maximum length of MUPSs covering position $p$.
\end{corollary}

We are now ready to prove Theorem~\ref{thm:sum_of_MUPSlen}.
\begin{proof}[Proof of Theorem~\ref{thm:sum_of_MUPSlen}]
  {We divide the set of MUPSs into two sets.}
  First, let $\mathit{ML}$ and $\mathit{MS}$ be the empty sets initially.
  We then perform the following operations for each MUPS of $T$
  in descending order of their lengths (the order of two elements of the same length is arbitrary):
  If $[b,e] \in \MUPS(T)$ does not share the same position for any $[b',e'] \in \mathit{ML}$, then update $\mathit{ML} = \mathit{ML} \cup \{[b,e]\}$.
  Otherwise, add $[b,e]$ to $\mathit{MS}$. 
  Then, $\sum_{[b,e]\in\mathit{ML}}{(e-b+1)} \leq \abs{T}$ holds since all elements in $\mathit{ML}$ do not overlap each other.
  Also, any MUPS $[b,e] \in \mathit{MS}$ contains some position $p$ such that $b' \leq p \leq e'$ with $[b',e'] \in \mathit{ML}$.
  Moreover, since MUPSs cannot be nested,
  each MUPS in $\mathit{MS}$ contains either or both of
  the ending position of MUPS $[b'_i,e'_i] \in \mathit{ML}$ and
  the beginning position of MUPS $[b'_{i+1},e'_{i+1}] \in\mathit{ML}$
  for some $i$.
By Corollary~\ref{cor:sum_of_stabbed_MUPSlen}, 
  the sum of the length of MUPSs covering position $b'_i$ or $e'_i$ is $O(e'_i-b'_i+1)$.
  Therefore, $\sum_{[b,e]\in\mathit{MS}}{(e-b+1)} \in O(\sum_{[b',e']\in\mathit{ML}}{(e'-b'+1)}) \subseteq O(\abs{T})$ holds (see also Figure~\ref{fig:sum_of_MUPSs}).
  This completes the proof.
\end{proof}
\begin{figure}[t]
  \centerline{
    \includegraphics[width=0.8\linewidth]{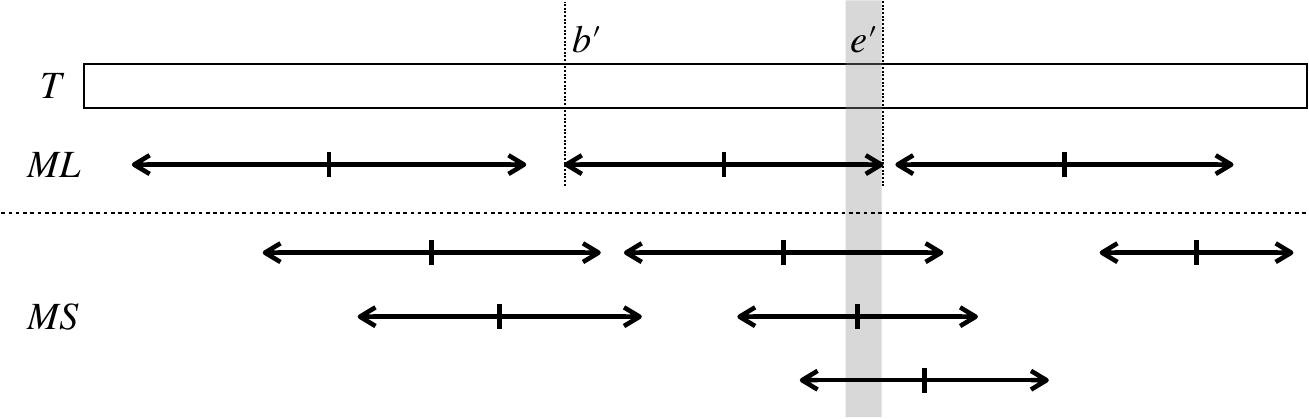}
  }
  \caption{
    Illustration for the proof of Theorem~\ref{thm:sum_of_MUPSlen}.
    There are three MUPSs in $\mathit{ML}$ and six MUPSs in $\mathit{MS}$.
    Also, there are four MUPSs that cover the ending position $e'$ of MUPS $[b', e'] \in \mathit{ML}$.
    Since $[b', e']$ is a longest one among them,
    the sum of their lengths is bounded by $O(e' - b'+1)$ by Corollary~\ref{cor:sum_of_stabbed_MUPSlen}.
  }
  \label{fig:sum_of_MUPSs}
\end{figure}

Using Theorem~\ref{thm:sum_of_MUPSlen}, we design a simple algorithm for computing all point SUPSs.
As in Inoue et al.'s method described in Section~\ref{sec:static_algo},
we use the MUPSs and maximal palindromes to compute point SUPSs.
However, we can avoid using an RmQ data structure.
\begin{proposition}\label{prop:simple}
  Given the set of MUPSs of a string $T$ of length $n$ and the set of maximal 
  {palindromes} 
  of $T$
  one can compute all point SUPSs for all positions in $O(n)$ time without using RmQs.
\end{proposition}
\begin{proof}
First, for each position, we record the shortest MUPS(s) covering the position.
  If there are no such MUPSs for a position, we record $\infty$ for the position.
  This can be done in $O(n)$ time by scanning all MUPSs naively since Theorem~\ref{thm:sum_of_MUPSlen} holds.
Next, for each position $p$, compare the three following values and find the shortest one(s):
  (1) the shortest expansion of the left-neighbor MUPS of $p$,
  (2) the recorded value (i.e., the length of the shortest MUPS covering $p$), and
  (3) the shortest expansion of the right-neighbor MUPS of $p$.
  Note that (1) and (3) may not exist.
  This can be done in a total of 
  {linear} 
  time
  by using an $O(n)$-space and $O(1)$-query time predecessor/successor data structure.
\end{proof}
 \section{Compact SUPS Data Structures} \label{sec:compact}
In this section, we propose space-efficient SUPS data structures.
{To our knowledge, the only SUPS data structure that can be sublinear size,
i.e., $o(n \log n)$ bits, is that of Watanabe et al.~\cite{WatanabeNIBT20}.}
Watanabe et al.~\cite{WatanabeNIBT20}
proposed a SUPS data structure of size {$O(r\log n)$ bits}
where $r \le n$ is the size of the run-length encoded string.
Their data structure will be small when the input string is highly compressible with run-length encoding.
On the other hand, {$O(r\log n)$ bits} can be large as much as {$O(n\log n)$ bits} in the worst case.
In this section, we propose $O(n)$-bits data structures which can answer SUPS queries in optimal time.
Namely, {our data structure is} always space-efficient
regardless of the compression scheme or characteristic structures of the input string.
In the rest of this paper, let $m$ be the number of MUPSs of string $T$.
The sizes of our data structures are $3n+2m+o(n)$ bits for interval SUPS queries, and $3n+m'+o(n)$ bits for point SUPS queries
where $m' \le m$ is the number of \emph{meaningful MUPSs} that we will define later.

\subsection{{Data Structures for} Interval SUPS Queries}
First, we show a compact representation of the data structure of Inoue et al.
Our data structure consists of compact representations of
(1) the set of MUPSs, (2) the set $\mathcal{M}$ of maximal palindromes each of which is an expansion of some MUPS, and
(3) an RmQ data structure over the sequence of the lengths of MUPSs.
\begin{itemize}
  \item[(1)]
    We represent the set of MUPSs as two length-$n$ bit-arrays $B$ and $E$ that indicate the beginning and the ending positions of MUPSs.
    Namely, $B[i] = \mathtt{1}$ iff some MUPS begins at $i$, and  $E[j] = \mathtt{1}$ iff some MUPS ends at $j$ for each $1 \le i, j \le n$.
    Since MUPSs cannot be nested, the number of the set-bits in $B$ is exactly $m$, and in $E$ as well~(see Fig.~\ref{fig:BEL} for examples).
  \item[(2)]
    We represent $\mathcal{M}$ as the length-$n$ bit-array $L$ that indicates the beginning positions of maximal palindromes in $\mathcal{M}$.
    Namely, $L[i] = \mathtt{1}$ iff some palindrome in $\mathcal{M}$ begins at $i$ for each $1 \le i \le n$.
    Since all palindromes in $\mathcal{M}$ are unique by the definition, they cannot be nested,
    and thus, the number of the set-bits in $L$ is exactly $m$.
    Namely, the $i$-th set-bit in $L$ corresponds to the $i$-th MUPS.
    Note that, for any $k$, we can restore the ending position of the $k$-th palindrome in $\mathcal{M}$ from three arrays $B$, $E$, and $L$,
    that is, $b + e - \ell$ where $b$, $e$, and $\ell$ are the positions of the $k$-th set-bits in $B$, $E$, and $L$, respectively.
  \item[(3)]
    We build the succinct RmQ data structure of~\cite{rmqsuccinct} on the sequence of the lengths of MUPSs.
    The size of the data structure is $2m+o(m)$ bits.
\end{itemize}
Also, we enhance three bit-arrays $B$, $E$, and $L$ with rank/select dictionaries.
Then, we can completely simulate the algorithm of Inoue et al., i.e., any SUPS query can be answered in constant time.
This data structure requires $3n + 2m + o(n)$ bits of space.

{The construction time is linear: All the maximal palindromes in $T$ and all the MUPSs of $T$ can be computed in $O(n)$ time~\cite{manacher1975new,inoue2018algorithms}, and hence, three bit-arrays $B$, $E$, and $L$ can be computed in $O(n)$ time. Also, the rank/select dictionaries and the succinct RmQ data structure for a bit-array of length $n$ can be constructed in $O(n)$ time~\cite{Jacobson89,clark1997compact,rmqsuccinct}.}
\begin{figure}[t]
  \centerline{
    \includegraphics[width=0.6\linewidth]{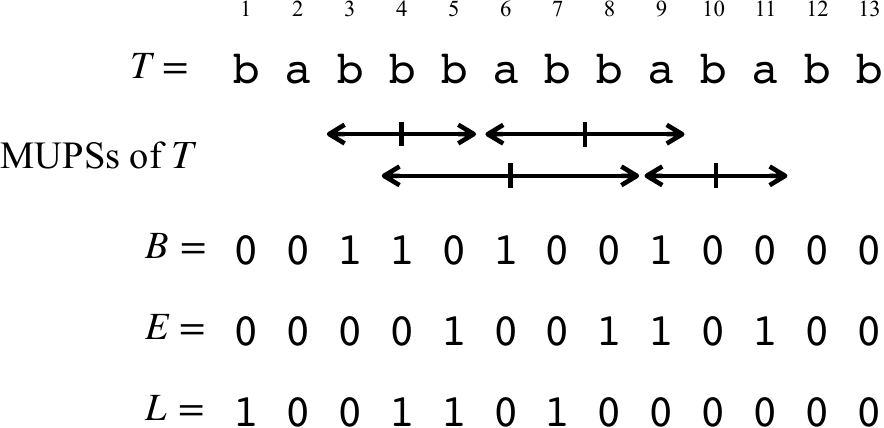}
  }
  \caption{
    Three arrays $B$, $E$, and $L$ for string $T = \mathtt{babbbabbababb}$. 
    The first two arrays $B$ and $E$ indicate the beginning and ending positions of MUPSs of $T$.
    The third array $L$ indicates the beginning positions of the maximal expansions of MUPSs of $T$.
    For instance, $L[1] = \mathtt{1}$ holds
    since $T[1.. 7] = \mathtt{babbbab}$ is the maximal palindrome centered at position $4$, which is the center of MUPS $T[3.. 5]$.
  }
\label{fig:BEL}
\end{figure}

To summarize, we obtain the next theorem:
\begin{theorem}
  There is a data structure of size $3n + 2m + o(n)$
  that can answer any interval SUPS query in $O(1)$ time
  where $m$ is the number of MUPSs of $T$.
  Also, given $T$, we can construct the data structure in $O(n)$ time.
\end{theorem}

\subsection{{Data Structures for} Point SUPS Queries}
As for the point SUPS queries, we can further reduce the space usage with a new algorithm specialized to point queries.
Firstly, we define a new\footnote{This is inspired by a similar notion defined for minimal unique substrings in \cite{Tsuruta2014}.} notion for MUPSs.
\begin{definition}
  A MUPS is said to be meaningful if there is a SUPS which is an expansion 
  {of the respective MUPS} 
  for some position.
  {MUPSs which are not meaningful are said to be meaningless.}
\end{definition}
For example, in Fig.~\ref{fig:BEL},
MUPS $T[4.. 8] = \mathtt{bbabb}$ of length $5$ is meaningless
since the lengths of SUPSs for positions $4$ and $5$ are $3 < 5$ and 
the lengths of SUPSs for positions $6$, $7$, and $8$ are $4 < 5$.
Given the set of MUPSs, we can compute the set of meaningful MUPSs in $O(n)$ time
by computing all SUPSs for all positions (e.g.,~Proposition~\ref{prop:simple}) and removing MUPSs unused.

Let $\MLen = (x_1, \ldots, x_{m'})$ be the sequence of the lengths of meaningful MUPSs sorted in increasing order on their starting positions.
Also, let $\MLen_p \subseteq \MLen$ be the sequence of the lengths of meaningful MUPSs stabbed by a position $p$.
The next lemma states that $\MLen_p$ has a sort of monotonicity.
\begin{lemma}\label{lem:vshape}
  There are no three elements $x_i$, $x_j$, and $x_k$ in $\MLen_p$
  such that $i < j < k$ and $x_i < x_j > x_k$.
\end{lemma}
\begin{proof}
  Assume on the contrary that there exist $x_i$, $x_j$, and $x_k$ satisfying the conditions above.
  Let $s_i$, $s_j$, and $s_k$ be the starting positions of the MUPSs, respectively.
{Since MUPSs cannot be nested and three the MUPSs cover the same position $p$,}  every position inside the second MUPS is covered by the first MUPS or the third MUPS.
  Namely, $s_i < s_j$, $s_j+x_j-1 < s_k+x_k-1$, and $s_k \le p \le s_i+x_i-1$ hold~(see Fig.~\ref{fig:meaninglessMUPS}).
  For each position $q \in [1, s_i]$, there is no expansion of the second MUPS $T[s_j.. s_j+x_j-1]$ starting at $q$
  because if such a palindrome exists, it contradicts the uniqueness of the first MUPS $T[s_i.. s_i+x_i-1]$.
  Symmetrically, for each position $q' \in [s_k+x_k-1, n]$, there is no expansion of the second MUPS $T[s_j.. s_j+x_j-1]$ ending at $q'$.
  Finally, for each 
  {position} 
  $q'' \in [s_i+1, s_k+x_k-2]$,
  any palindrome covering both $q''$ and $[s_j, s_j+x_j-1]$ cannot be a SUPS for $q''$
  since the second MUPS is longer than another MUPS covering $q''$.
  Thus, the second MUPS is meaningless, a contradiction. 
\end{proof}
\begin{figure}[t]
  \centerline{
    \includegraphics[width=0.8\linewidth]{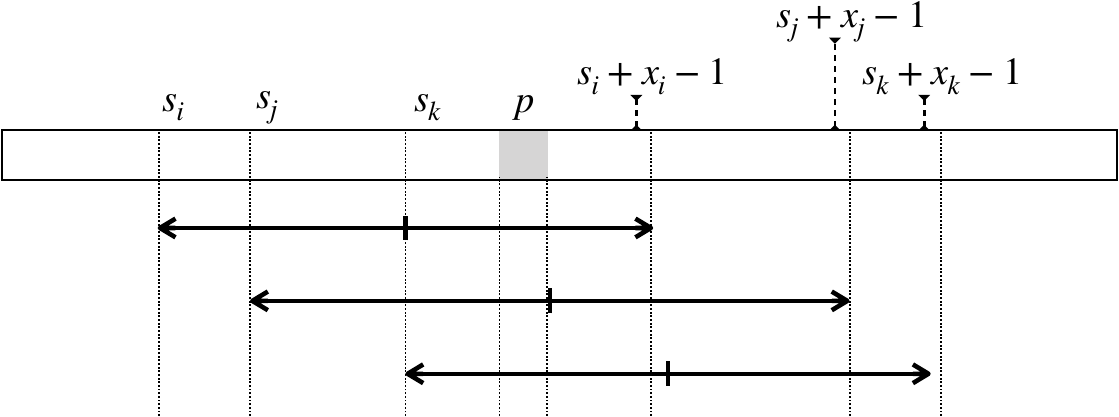}
  }
  \caption{
    Illustration for a contradiction in the proof of Lemma~\ref{lem:vshape}.
    The second MUPS which is the longest among three MUPSs cannot be a meaningful MUPS.
  }
\label{fig:meaninglessMUPS}
\end{figure}

From this lemma, $\MLen_p$ can be regarded as a concatenation of
a (possibly empty) non-increasing sequence and a non-decreasing sequence.
Thus, if we find the leftmost value $x_t$ such that $x_t > x_{t-1}$,
then $x_{t-1}$ is the smallest in $\MLen_p$ and $x_{t'}$ is not for each $t' \ge t$.
Also, from the monotonicity of $\MLen_p$ before $x_t$,
we can find MUPSs of length {equal to} $x_{t-1}$
by (backward) linear search starting at $x_t$ on $\MLen_p$.
In order to find such $x_t$, we precompute the bit-array $\inc$ of length $\abs{\MLen}$
such that $\inc[0] = \mathtt{0}$, and $\inc[i]=\mathtt{1}$ iff $x_i>x_{i-1}$.
Also, we enhance the array $\inc$ with rank/select dictionaries so that
we can answer any successor query on $\inc$ in constant time.

Our data structure includes the bit-arrays $B$, $E$, and $L$ as in the previous subsection.
On the other hand, instead of an RmQ data structure of size $2m+o(m)$ bits,
we use bit-array $\inc$ of length $m'$ and rank/select dictionaries of size $o(m')$
where $m' = \abs{\MLen} \le m$.

Our algorithm is almost the same as Inoue et al.'s one
except that 
{we use the bit-array} 
$\inc$ instead of an RmQ data structure.
Given a query position $p$,
we first compute the left-neighbor and right-neighbor MUPSs
and check whether their expansion can cover $p$ or not.
Simultaneously, we obtain the range of meaningful MUPSs stabbed by $p$.
Let $i$ (resp.,~$j$) be the index of the leftmost (resp.,~rightmost) meaningful MUPS stabbed by $p$.
We then find the position $k$ of the leftmost set-bit in $\inc[i+1.. j]$.
{If such a set-bit does not exist (i.e., $\inc[i+1.. j] = \boldsymbol{0}$), let $k = j+1$ for convenience.}
{By the definition of $\inc$ and $k$,
the prefix $(x_{i}, \ldots, x_{k-1})$ of $\MLen_p$ is non-increasing.
Especially, if $k = j+1$, then $\MLen_p = (x_{i}, \ldots, x_{j})$ is non-increasing, and hence, $x_{j}$ is the smallest within $\MLen_p$.
Since there are at most four SUPSs (Theorem~\ref{thm:num_of_SUPS}),
it suffices to compare $x_{j+1-t}$ for $t = 1, 2, 3, 4$.
If $k \le j$, then the suffix $(x_{k}, \ldots, x_j)$ of $\MLen_p$ is non-decreasing by Lemma~\ref{lem:vshape}.
In this case, $x_{k-1}$ is the smallest since $x_{k-1} < x_k$.
Again,} since there are at most four SUPSs (Theorem~\ref{thm:num_of_SUPS}),
it suffices to compare $x_{k-t}$ for $t = 1, 2, 3, 4$.
Therefore, we can find all MUPSs that are candidates for SUPSs for $p$ without using any RmQ data structure.

All the above operations can be performed in constant time
by using rank/select dictionary on $\inc$, $B$, $E$, and $L$.
We obtain the next theorem:
\begin{theorem}
  There is a data structure of size $3n + m' + o(n)$
  that can answer any point SUPS query in $O(1)$ time
  where $m'$ is the number of meaningful MUPSs of $T$.
  Also, given $T$, we can construct the data structure in $O(n)$ time.
\end{theorem}

 \section{Semi-dynamic SUPS Data Structures}\label{sec:dynamic_algo}
In this section, we introduce SUPS data structures
under two semi-dynamic models:
the sliding-window model and the after-edit model.
Our results are based on the static method proposed by Inoue et al.~\cite{inoue2018algorithms},
{which we reviewed in Section~\ref{sec:static_algo}.}

\subsection{Sliding-window Data Structures}

We 
{make} 
some modifications to 
{the static data structures from Section~\ref{sec:static_algo}}
to answer any SUPS queries for a sliding window.

It is shown in \cite{mieno2021eertree} that
the number of changes of MUPSs is constant when we append a character or delete the first character,
and we can detect the changes in amortized $O(\log\sigma)$ time.
Further, predecessor and successor data structures on the MUPSs can be updated dynamically
in $O(\log\log n)$ time using van Emde Boas trees~\cite{vEBtree1977}.

For a dynamic RmQ data structure, we can use the one proposed by Brodal et al.~\cite{brodal2011dynamic}.
However, if we directly apply their data structure to our problem,
the updating time is in $\Omega(\log n / \log\log n)$, and it becomes a bottleneck.
In order to avoid such a situation, we use another dynamic data structure with some constraints which 
{suffices} 
for our problem.

As in the algorithm described {in Section~\ref{sec:static_algo}},
we will use RmQ on the sequence of the lengths of MUPSs.
The width of a query range of RmQ is bounded by the number of MUPSs covering query interval $[p, q]$~(see also Fig.~\ref{fig:sups_cand}).
It is known that the number of MUPSs covering any interval is $O(\log n)$~\cite{funakoshiMUPSafteredit},
hence the width of a query range of RmQ is also $O(\log n)$.
We call the range minimum query such that the width of any query is constrained in $O(\polylog(n))$ {\LogRmQ}.
Later, we show the following lemma:

\begin{lemma}\label{lem:dynamic_logrmq}
  There exists a linear size data structure for a dynamic array $A$
  that supports any {\LogRmQ} on $A$ in constant time.
  We can maintain the data structure in constant time
  when an element of $A$ is substituted by another value.
  Also, we can maintain the data structure in amortized constant time
  when some element is inserted to (or deleted from) $A$.
\end{lemma}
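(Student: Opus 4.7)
The plan is to build a hierarchical block decomposition of $A$ with $c+1$ levels, where $c$ is the fixed constant such that the query width is bounded by $\log^{c} n$. At level $i$ I use blocks of size $B_{i} = \tfrac{1}{2}\log^{i} n$, so each level-$i$ block contains $\tfrac{1}{2}\log n$ consecutive level-$(i-1)$ blocks; since any query range has width at most $\log^{c} n$, any such range is contained in at most two adjacent level-$c$ blocks. For every block I store a summary giving the position of its minimum value. Because a level-$i$ block holds only $\tfrac{1}{2}\log n$ sub-block summaries, those summaries can be encoded together in a single machine word, and range minima over any contiguous subrange of them can be returned in constant time via a universal lookup table of $o(n)$ size obtained by the Cartesian-tree signature technique familiar from static $O(n)$-space constant-time RmQ.

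To answer a width-$w$ query with $w \le \log^{c} n$, I localize the range to at most two adjacent level-$c$ blocks, split off in each of them the maximal sub-range that aligns with level-$(c-1)$ blocks (answerable by one table lookup on the level-$c$ word), and recurse on the two unaligned fractional pieces at the boundary. The recursion descends one level per step and generates at most two fractional pieces per invocation; after $c+1 = O(1)$ levels it bottoms out at individual array positions, so the overall query runs in $O(1)$ time. For a substitution at position $i$, the change touches exactly one block at each of the $c+1$ levels, and at every level I recompute the affected block's minimum-position summary in constant time by modifying one entry of its word representation and re-indexing the lookup table (using an auxiliary signature-update table indexed by the tuple (old signature, changed position, new relative rank), whose total size is sublinear because block signatures use only $O(\log n)$ bits and ranks only $O(\log \log n)$ bits).

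For an insertion or deletion, I maintain the block partition by standard charged rebuilding: whenever a block's size deviates from $B_{i}$ by more than a constant factor it is split or merged with a neighbor, and the affected summaries are recomputed at all enclosing levels. Because rebuilding a level-$i$ block costs $O(B_{i})$ time but is amortized against $\Theta(B_{i})$ preceding operations inside that block, each insertion or deletion runs in amortized $O(1)$ time; a global rebuild once every $\Theta(n)$ updates absorbs the slow growth of $\log n$ and keeps the tabulated block dimensions consistent.

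The main obstacle will be carrying out a substitution update in truly constant time at the bottom level: when one entry of a small block is replaced, the block's Cartesian-tree signature can change in many bits, so I cannot naively recompute it by scanning. Circumventing this needs a carefully engineered word-level procedure that uses packed comparisons and most-significant-bit extraction to derive the new signature from the old one together with the inserted rank, plus a verification that every precomputed table (including the signature-update table above) has $o(n)$ total size in the $\Omega(\log n)$-word-size model.
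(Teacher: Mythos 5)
There is a genuine gap, and it is exactly the one you flag at the end as ``the main obstacle'': your construction does not actually support constant-time substitutions at the bottom level, and the fix you sketch does not work. Your auxiliary signature-update table is indexed by the tuple (old Cartesian-tree signature, changed position, new relative rank), but the old Cartesian-tree signature does not determine the new one: two blocks with the same Cartesian tree can differ in the relative order of elements lying in incomparable subtrees, and after substituting one position with a value of a given rank they can produce different Cartesian trees (e.g.\ $(2,1,4,3)$ and $(3,1,4,2)$ share a Cartesian tree, but replacing position $2$ by the maximum yields trees rooted at position $1$ and at position $4$, respectively). So the table entry is not well defined; to make it well defined you would have to index by the full local-rank sequence of the block, and for blocks of $\Theta(\log n)$ elements there are $2^{\Theta(\log n\log\log n)}$ such sequences --- far too many to tabulate. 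Your size analysis is also off even for the query table: with blocks of $\frac{1}{2}\log n$ elements there are already $4^{\frac{1}{2}\log n}=n$ Cartesian trees, so ``$O(\log n)$ bits'' does not give a sublinear table.

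The paper's proof avoids this trap by choosing the bottom-level (small) blocks to have size $L=\Theta(\log^{\varepsilon}n)$ for a constant $\varepsilon<1$, rather than $\Theta(\log n)$. Then the \emph{entire} local-rank sequence of a small block occupies $o(\log n)$ bits, the number of distinct small blocks is $O((\log^{\varepsilon}n)^{\log^{\varepsilon}n})\subset o(n)$, and the lookup tables (for queries, and for updates, combined with Q-heaps as in Brodal et al.) are indexed by the full block content, so a substitution is trivially a constant-time re-indexing. The price is that collapsing a large block of size $\Theta(\log^{c}n)$ down to constant size now takes $O(c/\varepsilon)$ levels instead of your $c+1$, but that is still a constant, and the query decomposition (two fractional pieces per level plus one aligned summary query) is the same as yours. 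If you replace your level-granularity $\log n$ by $\log^{\varepsilon}n$ and index the tables by whole rank sequences instead of Cartesian-tree signatures, your argument goes through; as written, the substitution step fails.
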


Finally, we show that the set of maximal palindromes for a sliding window can be maintained efficiently.
We generalize Manacher's algorithm \cite{manacher1975new} to the sliding-window model.

\subsubsection{Manacher's Algorithm for Sliding Window.}
Manacher's algorithm is an online algorithm that computes the set of maximal palindromes in a string.
In this subsection, we apply Manacher's algorithm to the sliding-window model.
The problem 
{was} 
solved in \cite{Gawrychowski2019}, however, we will describe a sliding-window algorithm for completeness.

Important invariants of Manacher's algorithm before reading the $i$-th character are
(1) we know the center position $c$ of the longest palindromic suffix of $T[1..i-1]$, and
(2) we know all the maximal palindromes, each of whose center is at most $c$.
Note that for the SUPS query,
we are interested in maximal palindromes, which are \emph{unique} in the string.
Since any palindrome whose center is greater than $c$ is not unique,
the second invariant is sufficient for our purpose.

When a character is appended to the current window,
we update the set of maximal palindromes in the online manner of the original Manacher's algorithm.
When the first character of the window is deleted,
we do not need to do anything if the window $T[b.. e]$ itself is not a palindrome.
Instead, when we refer to the arm-length of the maximal palindrome centered at a specified position,
we need to consider that the left-end of the palindrome may exceed the left-end of the window.
Namely, if the stored arm-length for center $x$ is $\ell_x$, the actual arm-length is $\min\{\ell_x, \ceil{x-b}\}$.

If the window $T[b.. e]$ itself is a palindrome,
we need to update the longest palindromic suffix to keep the first invariant.
This can be done in amortized $O(1)$ time as in Manacher's algorithm.
More precisely, for every (half) integers $j = 0.5, 1, 1.5 \ldots$,
the arm-length of the maximal palindrome of center $\frac{b+e}{2} + j$ is equal to that of center $\frac{b+e}{2} - j$.
Thus, we copy them for incremental $j$'s until we find a palindromic suffix of $T[b+1.. e]$.
Then, we set $c$ to the center position of the suffix palindrome we found.
Since the sequence of center positions of the longest palindromic suffixes of the windows is non-decreasing
while running the algorithm, the total processing time is $O(n)$.

Therefore, we obtain the following:
\begin{theorem}
  There exists a data structure of size $O(W)$ for the sliding-window SUPS problem
  that supports 
    $\sups([p, q])$ in $O(\log\log W)$ time and
    $\swpushback(c)$ and $\swpop()$ in amortized $O(\log\sigma + \log \log W)$ time,
  where $W$ is the size of the window.
\end{theorem}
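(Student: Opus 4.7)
The plan is to compose the four ingredients introduced in this section---the sliding-window maintenance of MUPSs of \cite{mieno2021eertree}, van Emde Boas trees on the starting/ending positions of MUPSs, the constrained $\LogRmQ$ structure of Lemma~\ref{lem:dynamic_logrmq}, and the sliding-window Manacher construction just described---and to show that together they support the static SUPS query algorithm of Inoue et al.\ within the claimed bounds. I would maintain each of these four structures in sync with the current window $T[b..e]$ and read off the data structure size $O(W)$ component by component.

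For a $\sups([p, q])$ query I would run the static algorithm verbatim on top of the dynamic structures. I would use predecessor and successor queries on the MUPS endpoint arrays in $O(\log\log W)$ time each to determine how many MUPSs lie inside $[p, q]$ and, if at most one does, to identify the left- and right-neighbor MUPSs $M_l$ and $M_r$. For each of $M_l, M_r$, a constant-time lookup of the Manacher arm length at $\cent(M_l)$ and $\cent(M_r)$---clipped to $[b, e]$ by the rule $\min\{\ell_m, \lceil m - b \rceil\}$ as in the deletion-handling discussion---would produce the type-(1) and type-(3) candidates. The type-(2) candidates would be produced by the recursive $\LogRmQ$ procedure on the array of MUPS lengths. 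Since $|\SUPS_T([p, q])|\le 4$ by Theorem~\ref{thm:num_of_SUPS}, only $O(1)$ recursive $\LogRmQ$ calls are ever made, and each costs $O(1)$, so the total query cost is dominated by the constant number of predecessor/successor calls and is therefore $O(\log\log W)$.

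Next I would handle the updates. On each $\pushback$ or $\pop$, the amortized $O(\log\sigma)$ routine of \cite{mieno2021eertree} reports the $O(1)$ MUPSs that are created or destroyed; each such change triggers $O(1)$ insertions/deletions in the two van Emde Boas trees, at cost $O(\log\log W)$ each, and $O(1)$ updates to the $\LogRmQ$ array, at amortized cost $O(1)$ each by Lemma~\ref{lem:dynamic_logrmq}. The sliding-window Manacher bookkeeping runs in amortized $O(1)$ per operation, by the argument given above. Summing these contributions yields the claimed amortized $O(\log\sigma + \log\log W)$ update bound, and the $O(W)$ space bound follows since each component is linear in the current window size.

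The step I expect to require the most care is arguing cleanly that no separate dynamic bidirectional LCE data structure is needed: in the static algorithm a bidirectional LCE may be used in place of the explicit maximal-palindrome table, but maintaining an LCE data structure under $\pushback$ and $\pop$ within the target budget is not obvious. The key observation to make explicit is that, for the centers $\cent(M_l)$ and $\cent(M_r)$ that the type-(1)/(3) reasoning actually probes, the sliding-window Manacher table plus the clipping formula already delivers the maximal palindrome restricted to $[b, e]$ in constant time, so LCE queries are never invoked. Once this is pinned down, the rest of the proof is a routine composition of the cost bounds of the four components.
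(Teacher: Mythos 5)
Your proposal is correct and follows essentially the same route as the paper: the theorem is obtained by composing the sliding-window MUPS maintenance of \cite{mieno2021eertree}, van Emde Boas predecessor/successor structures, the $\LogRmQ$ structure of Lemma~\ref{lem:dynamic_logrmq}, and the sliding-window Manacher table, with the query cost dominated by the $O(\log\log W)$ predecessor/successor calls and the update cost by the $O(\log\sigma)$ MUPS maintenance plus $O(\log\log W)$ van Emde Boas updates. Your observation that the Manacher table (with the clipping rule) replaces any dynamic LCE structure for the type-(1)/(3) candidates is exactly the role the paper assigns to its sliding-window Manacher subsection.
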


\subsection{After-edit Data Structure}

In this subsection, 
we design a SUPS data structure for the after-edit model.
Basically, the idea is the same as the previous one.
The only difference is that we do not maintain maximal palindromes in the {after-substitution SUPS} problem.
Instead, we use a bidirectional LCE on the original string $T$.

\begin{theorem}
  There exists a data structure of size $O(n)$ for the {after-substitution SUPS} problem
  that can be updated in amortized $O(\log\sigma + (\log\log n)^2 + d\log\log n)$ time for a single substitution and
  can answer any subsequent SUPS queries in $O(k\log\log n)$ time,
  where
    $d$ is the number of changes of MUPSs when the substitution is applied to $T$, and
    $k$ is the number of the SUPS queries after the substitution.
  Also, given a string $T$, the data structure can be constructed in $O(n)$ time.
\end{theorem}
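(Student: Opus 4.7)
The plan is to adapt the static data structure of Inoue et al.\ reviewed above, replacing the precomputed table of maximal palindromes by the bidirectional LCE data structure on $T$, so that when a substitution is temporarily applied we do not need to recompute anything related to maximal palindromes. Concretely, preprocessing builds in $O(n)$ total time: $\MUPS(T)$, the bidirectional LCE data structure on $T\#T^R\$$, van~Emde~Boas trees keyed respectively on the starting and ending positions of the MUPSs, and the dynamic $\LogRmQ$ structure of Lemma~\ref{lem:dynamic_logrmq} over the array $\mathsf{MUPSlen}$ of MUPS lengths.

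When a substitution query arrives, I would invoke the after-edit MUPS maintenance algorithm of~\cite{funakoshiMUPSafteredit} to enumerate the $d$ inserted and deleted MUPSs; this stage accounts for the $O(\log\sigma + (\log\log n)^2)$ additive term. Each of the $d$ changes is then pushed into the two van~Emde~Boas trees in $O(\log\log n)$ time and into the $\LogRmQ$ structure in amortized $O(1)$ time by Lemma~\ref{lem:dynamic_logrmq}, giving the claimed $d\log\log n$ contribution. I would also log every modification so that, once the $k$ subsequent queries have been answered, each data structure can be rolled back in $O(d\log\log n)$ time, as required by the after-edit semantics in which the substitution is discarded.

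For each SUPS query $[p,q]$ on the edited string $T'$, the static algorithm of Inoue et al.\ runs essentially verbatim. The predecessor/successor calls cost $O(\log\log n)$ each, so locating the MUPSs contained in $[p,q]$ and the left- and right-neighbor MUPSs takes $O(\log\log n)$ time in total; the recursive enumeration driven by $\LogRmQ$ produces each SUPS in constant time, since at most $O(\log n)$ MUPSs can be contained in $[p,q]$ by~\cite{funakoshiMUPSafteredit}, so each range queried satisfies the $\LogRmQ$ width constraint. Whenever the algorithm needs the maximal palindrome of $T'$ centered at some position, I compute it by a bidirectional LCE call on the original $T$; if the edited position lies strictly inside the arm being extended, I split the LCE into two queries that avoid that position and stitch the pieces with a single character comparison at the edit, still in $O(1)$ time. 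Summing across the $k$ queries yields the $O(k\log\log n)$ bound.

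The main obstacle is precisely this simulation of LCE on $T'$ via LCE on $T$: the edited character can occur anywhere inside the palindromic arm under construction, so both the forward and the reverse LCE must be split, recombined, and checked consistently while still supporting the shortest-contraction step that picks the type~(1) and type~(3) candidates of Fig.~\ref{fig:sups_cand}. The required case analysis is the same used in earlier after-edit palindromic work~\cite{Funakoshi21}, and care is needed to keep the cost of each maximal-palindrome computation constant; all other ingredients are direct plug-ins of the static algorithm and of Lemma~\ref{lem:dynamic_logrmq}, so the verification reduces to accounting.
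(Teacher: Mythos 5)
Your proposal is correct and follows essentially the same route as the paper: the same preprocessing ($\MUPS(T)$, bidirectional LCE on $T\#T^R\$$, van Emde Boas trees, and the {\LogRmQ} structure of Lemma~\ref{lem:dynamic_logrmq}), the same update accounting via the after-edit MUPS algorithm of~\cite{funakoshiMUPSafteredit}, and the same simulation of maximal palindromes in $T'$ by kangaroo-jump LCE queries on the original $T$. The only additions beyond the paper's proof are the explicit rollback step and the per-query cost breakdown, both of which are consistent with the paper's intent.
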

\begin{proof}
  Given a substitution operation, we can detect all the changes of MUPSs
  in $O(\log\sigma + (\log\log n)^2 + d)$ time~\cite{funakoshiMUPSafteredit}.
  Then,
  the set of MUPSs can be updated in $O(d)$ time,
  the predecessor/successor data structures can be updated in $O(d\log\log n)$ time, and
  the {\LogRmQ} data structure can be updated in amortized $O(d)$ time by Lemma~\ref{lem:dynamic_logrmq}.
Finally, we can compute the maximal palindromes in $T'$ that are expansions of the left-neighbor and the right-neighbor MUPSs
  by answering a constant number of bidirectional LCE queries on $T$ while skipping the edited position (so-called kangaroo jumps).
Also, it is known that the set of MUPSs of $T$, the predecessor/successor data structures, and the LCE data structure can be computed in $O(n)$ time.
  Further, the {\LogRmQ} data structure can be computed in $O(n)$ time by Lemma~\ref{lem:dynamic_logrmq}.
\end{proof}

 \subsection{Dynamic LogRmQ} \label{subsec:dynamic_rmq}

In this subsection, we give a proof of Lemma~\ref{lem:dynamic_logrmq}.
We assume that the width of the query range is constrained in $O(\log^c n)$ for a fixed constant $c$.
  We first consider dividing the input array $A$ into blocks of size $\log^cn$.
We call each of the blocks large block.
Then, we build a linear size dynamic RmQ data structure on each large block.
Given a query range of width
  $O(\log^c n)$,
we get range minima from a constant number of large blocks and then naively compare them.

We update the RmQ data structure on the large block containing the edited position
when the input array $A$ is edited.
If the size of a large block becomes far from
  $\log^c n$
by insertions or deletions,
then we split a block or merge continuous blocks to keep the size in
  $\Theta(\log^c n)$.
For example, we split a block into two blocks when the block size exceeds
  $2\log^c n$
and
merge two adjacent blocks when the block size falls below
  $\frac{1}{2}\log^c n$.
If each large block can be updated in amortized constant time,
the whole data structure can also be updated in amortized constant time.
In the next subsection, we consider how to treat a large block.

\paragraph{{\bf Recursive Structure of Large Block.}}

In order to update large blocks efficiently, we apply the path minima data structure proposed by Brodal et al.~\cite{brodal2011dynamic}.
They treated the problem of path minima queries on a tree, a generalization of range minimum queries on an array.

First, we divide a large block $B$ of length $\Theta(\log^c n)$ into small blocks each of length $L = \Theta(\log^\varepsilon n)$
where $\varepsilon < 1$ is an arbitrary small constant.

Let $B_1$ be the array of length
  $\Theta(\log^{c-\varepsilon} n)$
that stores the minima of small blocks on $B$.
A query on large block $B$ can be reduced to at most two queries on small blocks and at most one query on $B_1$ (see Fig.~\ref{fig:rmq}).
\begin{figure}[t]
  \centerline{
    \includegraphics[width=0.7\linewidth]{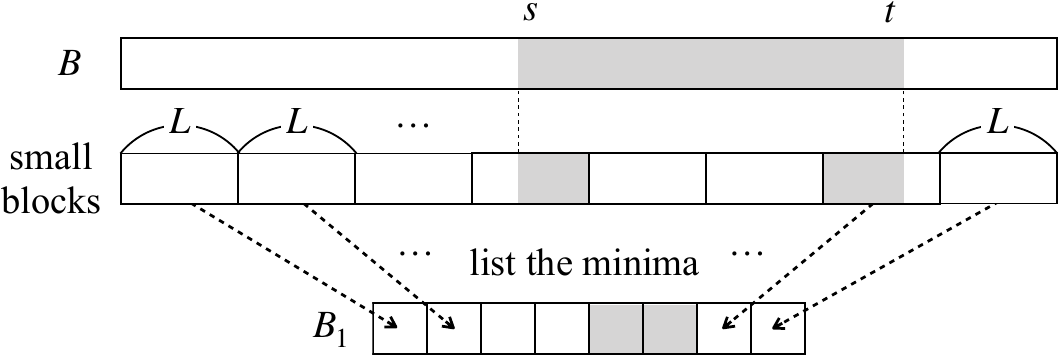}
  }
  \caption{
    Illustration for dividing a large block $B$ into small blocks.
    Query $[s, t]$ on $B$ can be reduced to 
    queries inside the fourth and the seventh small blocks,
    and query $[5, 6]$ on $B_1$.
  }
  \label{fig:rmq}
\end{figure}
Similarly, for every $i \ge 2$, we divide $B_{i-1}$ into small blocks of the fixed-length $L$
and let $B_i$ be the array of size
  $\Theta(\log^{c-i\varepsilon} n)$
that stores the minima of small blocks on $B_{i-1}$.
A query on $B_i$ can be reduced to at most two queries on small blocks and at most one query on $B_{i+1}$ at the next level.
We recursively apply such division until the size of $B_i$ becomes a constant.
The recursion depth is
  $O(c/\varepsilon)$,
i.e., a constant.
Notice that recursion occurs at most once at each level.
Thus, if we answer RmQ inside a small block in constant time, then the total query time is also a constant.

We answer a query on each small block by using a lookup-table,
where the index is a pair of a small block and a query range, and the value is the answer (the position of a minimum).
Since RmQ returns the \emph{position} corresponding to a range minimum,
we can convert each small block 
{to a sequence of its} 
\emph{local ranks}.
Then, the number of possible variants of such small blocks is at most 
$O((\log^\varepsilon n)^{\log^\varepsilon n}) \subset o(n)$.
Also, the total variations with all possible query intervals are still $O((\log^\varepsilon n)^2)$ $\subset o(n)$,
i.e., the number of elements in the lookup-table is $O((\log^\varepsilon n)^{\log^\varepsilon n+2}) \subset o(n)$.
Furthermore, a small block (i.e., an element in the lookup-table) can be represented in $o(\log n)$ bits:
the length, the pointers to each element, and the local ranks.
Thus, table lookup can be done in constant time.
Namely, the time complexity of an RmQ on a small block is constant.
For substitutions (resp., insertions and deletions), updating small blocks can be done in worst-case (resp., amortized) constant time
by combining another lookup-table and Q-heap (cf.~\cite{brodal2011dynamic}).
Therefore, we have proven Lemma~\ref{lem:dynamic_logrmq}.
 \section{Conclusions and Discussions}\label{sec:conclusions}

In this paper, we studied SUPS problems 
{of} 
static and non-static strings.
Firstly, we showed combinatorial properties on the problems;
the tight upper bound on the maximum number of SUPSs for a single interval, and
the sum of lengths of MUPSs of a string is linear to the length of the string.
Secondly, we improved Inoue et al.'s time-optimal algorithm on space usage,
i.e., we designed a compact data structure of size $3n+2m+o(n)$ bits that can answer any interval SUPS query in constant time
where $n$ is the length of the input string and $m$ is the number of MUPSs of the input string.
Also, we proposed a new method specialized for the point SUPS problem,
and based on the method,
we designed a more space-efficient compact data structure of size $3n+m'+o(n)$ bits that can answer any point SUPS query in constant time
where $m'$ is the number of meaningful MUPSs of the input string.
Finally, we considered SUPS problems in two semi-dynamic models.
{We} 
proposed a data structure of size $O(W)$ for the sliding-window SUPS problem
that supports any SUPS query and window-shift operation in $\tilde{O}(1)$ time
where $W$ is the size of the window.
Further, 
we propose a data structure of size $O(n)$ for the {after-substitution SUPS} problem
that can answer any SUPS query after a single character substitution
in amortized $\tilde{O}(1)$ time.
As a by-product, we proposed a fully-dynamic data structure for the range minimum queries
in which the width of each query range is in $\tilde{O}(1)$.

Designing an efficient SUPS algorithm for a fully dynamic setting is future work.
All the known algorithms for SUPS queries for static/non-static strings 
basically precompute the set of MUPSs of the input string.
If we try to extend such algorithms to a fully-dynamic one,
maintaining the set of MUPSs may be a 
{bottleneck.}
To the best of our knowledge,
there is no study that deals with unique substrings in a dynamic string
while maintaining palindromic structures in a fully dynamic string has been studied in some literature~\cite{Amir_Boneh_19,Amiretal20}.
In general, occurrences of substrings 
{can change dramatically} 
when a string is edited.
The algorithm of~\cite{funakoshiMUPSafteredit} can capture the changes of MUPSs for a single edit
after linear-time preprocessing, however, it cannot be applied directly to multiple edits.
Another possible way is to compute SUPSs for a query interval without using MUPSs,
namely, to determine the uniqueness of palindromes covering the query interval in a dynamic string.
For an implementation of this idea, a dynamic suffix array proposed by Kempa and Kociumaka~\cite{dynamicSA} might be useful.

\section*{Acknowledgements}
We would like to thank Professor Jeffrey Shallit (University of Waterloo)
for his interest in our paper and his advice to simplify our proofs.
We would also like to thank the anonymous referees for their helpful comments on the manuscript.
This work was partially supported by the JSPS KAKENHI Grant Numbers JP20J11983, JP22K21273 (TM), and JP20J21147 (MF).
This preprint has not undergone peer review or any post-submission improvements or corrections.
The Version of Record of this article is published in Algorithmica, and is available online at \url{https://doi.org/10.1007/s00453-023-01170-8}.

\end{document}